\DeclareMathAlphabet{\mathcal}{OMS}{cmsy}{m}{n}
\newcommand{\NCP}{conflicting }
\newcommand{\BNCP}{Conflicting }
\newcommand{\dalg}{\mathcal{D}}
\newcommand{\ul}[1]{\underline{#1}}
\newtheorem{defi}{Definition}
\newtheorem{exmp}{Example}
\newtheorem{thm}{Theorem}
\newtheorem{fact}{Fact}
\title{Improved Conflict Detection for Graph Transformation with Attributes}
\author{G{\'e}za Kulcs{\'a}r 
\institute{ 
Technische Universit\"at Darmstadt\\
Real-Time Systems Lab\\ 
Merckstr. 25\\
64283 Darmstadt, Germany\\
}
\and
Frederik Deckwerth\thanks{Supported by CASED (www.cased.de).}
\institute{ 
Technische Universit\"at Darmstadt\\
Real-Time Systems Lab\\ 
Merckstr. 25\\
64283 Darmstadt, Germany\\
}
\and
Malte Lochau
\institute{ 
Technische Universit\"at Darmstadt\\
Real-Time Systems Lab\\ 
Merckstr. 25\\
64283 Darmstadt, Germany\\
}
\and
Gergely Varr\'o
\institute{ 
Technische Universit\"at Darmstadt\\
Real-Time Systems Lab\\ 
Merckstr. 25\\
64283 Darmstadt, Germany\\
}
\and
Andy Sch\"urr\thanks{This work has been co-funded by the DFG within the Collaborative Research Center (CRC) 1053 -- MAKI.}
\institute{ 
Technische Universit\"at Darmstadt\\
Real-Time Systems Lab\\ 
Merckstr. 25\\
64283 Darmstadt, Germany\\
}
\and
\email{\{geza.kulcsar|frederik.deckwerth|malte.lochau|gergely.varro|andy.schuerr\}@es.tu-darmstadt.de}
}
\begin{document}
\maketitle 
\begin{abstract}

In graph transformation, a conflict describes a situation where two alternative transformations cannot be arbitrarily serialized. When enriching graphs with attributes, existing conflict detection techniques typically report a conflict whenever at least one of two transformations manipulates a shared attribute. In this paper, we propose an improved, less conservative condition for static conflict detection of graph transformation with attributes 
by explicitly taking the semantics of the attribute operations into account. The proposed technique is based on symbolic graphs, which extend the traditional notion of graphs by logic formulas used for attribute handling. The approach is proven complete, i.e., any potential conflict is guaranteed to be detected.

\end{abstract}

\section{Introduction}
\label{01_Intro}

According to the \emph{Model-Driven Engineering} (MDE) principle, systems under design are represented by graph-based models. The change and evolution of such models is frequently described by the declarative, rule-based approach of \emph{graph transformation} \cite{Fundamentals,Handbook1}. 
However, models arising in real-world application scenarios typically contain numerical as well as textual attributes in addition to the graph-based structure.
For this purpose, an extension to graph transformation is required, being capable of representing and manipulating attributes of nodes and edges.


A major challenge in graph transformation is to statically analyse possible conflicts between rule applications.
The goal of conflict detection is to check if two 
graph transformation rules, both potentially applicable concurrently on the same input graph, are in any case arbitrarily serializable, 
i.e., if the two possible execution sequences result in the same (or at least two isomorphic) output graph(s). 

Critical Pair Analysis (CPA) is a common static analysis technique for conflict detection, defining 
a process of pairwise testing a set of graph transformation rules for possible conflicts~\cite{Fundamentals}. 
Unfortunately, a na\"ive adoption of CPA to graph transformation with attributes is too strict: 
whenever an attribute is modified by a rule application, and another rule application is 
also accessing the same attribute, they are immediately considered to be in conflict \cite{ConflOfAG}.

In this paper, we propose an improved, less conservative condition for static conflict detection of graph transformation with attributes 
by explicitly taking the semantics of the attribute operations into account. In particular, we make the following contributions: 
\begin{itemize}
\item{We define direct confluence as an appropriate conflict condition for graph transformation with attributes based on symbolic graphs, which reduces the number of false positives compared to existing conflict detection approaches. Using symbolic graphs further allows for an effective implementation of the proposed approach using a combination of graph transformation tools and off-the-shelf SMT solvers.}
\item{We prove 
that our approach is still complete \cite{Fundamentals}, i.e., any potential conflict is guaranteed to be detected.}
\end{itemize}

The paper is organized as follows: the basic concepts and definitions are introduced in Section 2. 
Section 3 proposes direct confluence as an improved conflict condition for rules with attributes and, based on that, \emph{\NCP pairs} are defined. 
In Section 4, the procedure for identifying conflicts is presented 
and proven complete. 
Section 5 surveys related work and Section 6 
concludes the paper.
\section{Preliminaries}
\label{02_Preliminaries}


In this section, we recapitulate the notions of symbolic graphs and symbolic graph transformation \cite{OL10a} that are used as a framework for our approach. 
Before getting into details of symbolic attributed graphs, we first define graphs and graph transformation without attributes.
\begin{defi}[Graphs and Graph Morphisms]\label{plaingr}
\normalfont
A \emph{graph} $G = (V_G,E_G,s_G,t_G)$ is a tuple consisting of a set of \emph{graph nodes} $V_G$, a set of \emph{graph edges} $E_G$, and 
the \emph{source} and \emph{target} functions $s_G,t_G:E_G \to V_G$, respectively.
A \emph{graph morphism} $f=(f_V,f_E):G \to H$, for mapping a graph $G$ to a graph $H$, consists of two functions 
$f_V:V_G \to V_H$ and $f_E:E_G \to E_H$ preserving the source and target functions: $f_V \circ s_G = s_H \circ f_E$ and 
$f_V \circ t_G = t_H \circ f_E$.
A graph morphism  is a \emph{monomorphism} if $f_V$ and $f_E$ are injective functions.
A graph morphism  is an \emph{isomorphism} if $f_V$ and $f_E$ are bijective functions.
\end{defi}

Based on this definition of graphs, graph transformation relies on the notion of \emph{pushouts}. 
A pushout has the following 
meaning (in the category of graphs): given three graphs $A,B,C$ and two morphisms $f:A \to B, g:A \to C$, their pushout consists of the \emph{pushout object} $P$ 
and two morphisms $g':B \to P, f':C \to P$, where $P$ is the \emph{gluing} of $B$ and $C$ along the elements of $A$, the latter being, 
in a way, present in both as $f(A)$ and $g(A)$, respectively. 
Correspondingly, {pullbacks} are the counterpart of pushouts. Given three graphs $B,C,P$ and two morphisms $g':B \to P, f':C \to P$, their pullback consists of the \emph{pullback object} $A$ and the morphisms $f:A \to B, g:A \to C$, 
where $A$ can be seen as the intersection of $B$ and $C$, i.e., the elements of $B$ and $C$ which are overlapping in $P$.

In the following, we use the double pushout (DPO) approach to define graph transformation \cite{Fundamentals,Handbook1}.

\begin{defi}[Graph Transformation Rule]
\normalfont

A \emph{graph transformation rule} $r$ in the DPO approach consists of a left-hand side (LHS) graph $L$, an interface graph $K$, and a right-hand side (RHS) graph $R$ and the morphisms $l:K \to L$ and $r:K \to R$. 
\end{defi}

An \emph{application} of rule $r$ to a graph $G$ is defined by the two pushouts $(1)$ and $(2)$ in the diagram below:

\begin{figure}[H]
\centering
\includegraphics[width=0.28\textwidth]{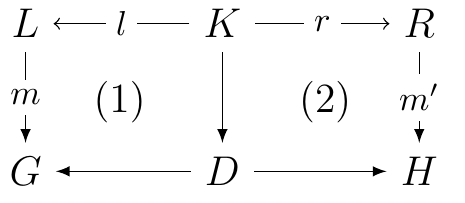}
\end{figure}

A rule is applied by first identifying a \emph{match} $m:L\to G$ of the left-hand side $L$ in graph $G$. 
In the next step, the \emph{context graph} $D$ is obtained by removing all elements in $G$ which are identified by match $m$,
but are not contained in the interface $K$. The result of the rule application, $H$, is obtained by 
adding all elements of the right-hand side $R$ to the context $D$ which do not have a pre-image in the interface $K$. 

A \emph{direct derivation} of rule $r$ at match $m$, denoted as $G \overset{r,m}{\Longrightarrow} H$, is the single step from 
graph $G$ to graph $H$ derived by applying rule $r$ to graph $G$ at the match $m$.

Until now, we have limited our discussion to plain graphs, i.e., graphs incapable of expressing attributes such as integer variables with corresponding operations.
As a first step towards
graphs with attributes, we extend their definition to E-graphs \cite{Fundamentals}. 
An E-graph is a graph extended by special kinds of label nodes ($V^D$) and edges ($E^{VL}$ and $E^{EL}$ for node and edge attribution, respectively) used for carrying the attribute values.

\begin{defi}[E-graphs and E-graph Morphisms \cite{Fundamentals}]
\normalfont
An 
\emph{E-graph} $EG = (G,D)$ is a tuple consisting of a graph $G$ and a \emph{labeling part} $D = (V^D_G,E^{VL}_G,E^{EL}_G,s^{VL}_G,
t^{VL}_G,s^{EL}_G,t^{EL}_G)$ with a set of \emph{label nodes} $V^D_G$, two sets of edges $E^{VL}_G$ and $E^{EL}_G$ for node and edge labeling, respectively, 
and the source and target functions $s^{VL}_G:E^{VL}_G \to V_G$, $t^{VL}_G:E^{VL}_G \to V^D_G$, $s^{EL}_G:E^{EL}_G \to E_G$ and $t^{EL}_G:E^{EL}_G \to V^D_G$ assigning the label nodes to the graph nodes an edges, respectively.

An \emph{E-graph morphism} $h = (h_G, h_D, h_{VL}, h_{EL})$ consists of a graph morphism $h_G$ and three functions 
$h_D, h_{VL}, h_{EL}$
mapping the label nodes and the labeling edges while preserving source and target functions.
An E-graph morphism is a \emph{monomorphism} (\emph{isomorphism}) if its functions are injective (bijective).
\end{defi}

In the following, we omit the $E$- prefix and denote E-graphs using e.g. $G$ instead of $EG$.

The construction of E-graphs contains labels as placeholders for attribute values. In order to be able to 
define and manipulate those attribute values, we employ a data algebra. 
A \emph{data algebra} $\mathcal{D}$ is a signature $\Sigma$ consisting of symbols for sorts, functions and predicates; and a mapping of these symbols to sets and functions, assigning meaning to the symbols.
For the examples, we use the algebra of natural numbers with addition and equality. This algebra consists of the sort symbol $\mathbb{N}$ representing the (infinite) set of natural numbers,  the binary function symbol '$+$' mapped to addition with the usual meaning, and the binary predicate symbol '$=$' defined by the equality relation on $\mathbb{N}$. 
For further details we refer to \cite{EhrigMahr85}.
%
%

%



The concept of symbolic graphs has been introduced recently to combine the concept of E-graphs for representing attributes and data algebras for the values of those attributes. This way, symbolic graphs provide 
a convenient representation of 
graphs with attributes \cite{OL10a}. 
In particular, a symbolic graph is an E-graph whose label nodes contain variables and the values of these variables are constrained by a first-order logic formula, also being part of the symbolic graph.

Given a $\Sigma$-algebra $\mathcal{D}$ and a set of variables $\mathcal{X}$,
a \emph{first-order logic formula} is built from the variables in $\mathcal{X}$, the
function and predicate symbols in $\Sigma$, the logic operators $\vee,\wedge,\neg,\Rightarrow,\Leftrightarrow$, 
the constants \textit{true} and \textit{false} 
and the quantifiers $\forall$ and $\exists$ in the usual way \cite{Sf67}. A \emph{variable
assignment} $\sigma:\mathcal{X} \to \mathcal{D}$ maps the variables $x \in \mathcal{X}$ to a value in
$\mathcal{D}$. A first-order logic formula $\Phi$ is \emph{evaluated} for a given assignment $\sigma$ 
by first replacing all variables in $\Phi$ according to the assignment
$\sigma$ and evaluating the functions and predicates according to the algebra, and the
logic operators. 
We write $\mathcal{D},\sigma \models \Phi$ if and only if
$\Phi$ evaluates to \textit{true} for the assignment $\sigma$; and $\mathcal{D}
\models \Phi$, if and only if $\Phi$ evaluates to \textit{true} for all assignments.


\begin{defi}[Symbolic Graphs and Symbolic Graph Morphisms \cite{OL10a}]
\normalfont
A \emph{symbolic graph} $SG = (G, \Phi_G)$ consists of an E-graph $G$ and
a first-order logic formula $\Phi_G$ over a given data algebra $\mathcal{D}$, using the label nodes of $G$ 
as variables and elements of $\mathcal{D}$ as constants. 

A \emph{symbolic graph morphism} $h: (G, \Phi_G) \to (H, \Phi_H)$ 
is an E-graph morphism $h:G \to H$ such that $\mathcal{D} \models \Phi_H \Rightarrow h_\Phi(\Phi_G)$, where $h_\Phi(\Phi_G)$ 
is the first-order logic formula obtained when replacing each
variable $x$ in formula $\Phi_G$ as defined by the mapping for the label nodes $h_D(x)$.
The symbolic graphs $SG_1 = (G_1,\Phi_1)$ and $SG_2 = (G_2,\Phi_2)$ are isomorphic if there is a 
symbolic graph morphism $h: SG_1 \to SG_2$ that is an E-graph isomorphism and $\mathcal{D} \models h_\Phi(\Phi_1) \Leftrightarrow \Phi_2$.
\end{defi}
As the variables and, thus, the attribute values are determined by a first-order logic formula, a symbolic graph can be seen as a class of \emph{grounded symbolic graphs} (GSG). A \emph{grounded symbolic graph} is a symbolic graph where\begin{inparaenum}[(i)]
\item each attribute value is constant, and
\item for each value of the data algebra, it contains a corresponding constant label node\end{inparaenum}. A grounded symbolic graph is created by adding to the set of label nodes a variable $c_v$ for each value $v$ in $\mathcal{D}$, and extending the formula with the equation $c_v=v$, which assigns a constant value to each constant variable.

\begin{defi}[Grounded Symbolic Graph \cite{OL10a}]
\normalfont
A symbolic graph $SG=(G,\Phi_{G})$ with data algebra $\mathcal{D}$ is grounded, denoted as $\underline{SG}$, if it includes a variable $c_v \in V^D_{G}$ for each value $v\in \mathcal{D}$, and for each variable assignment $\sigma : V^{D}_{G} \to \mathcal{D}$ such that $\mathcal{D}, \sigma \models \Phi_{G}$, it holds that $\sigma(c_v)=v$.
\end{defi}
A grounded symbolic graph $\underline{SH}$ is an \emph{instance} of a symbolic graph $SG$ via $h: SG \to \underline{SH}$ if $h$ is a symbolic graph morphism, which is injective for all kinds of nodes and edges except the label nodes.
\begin{figure} [tp]
\centering
\subfloat[Grounded Symbolic Graph \label{fig: ex0_symbgraph}]{%
\includegraphics[height=0.14\textwidth]{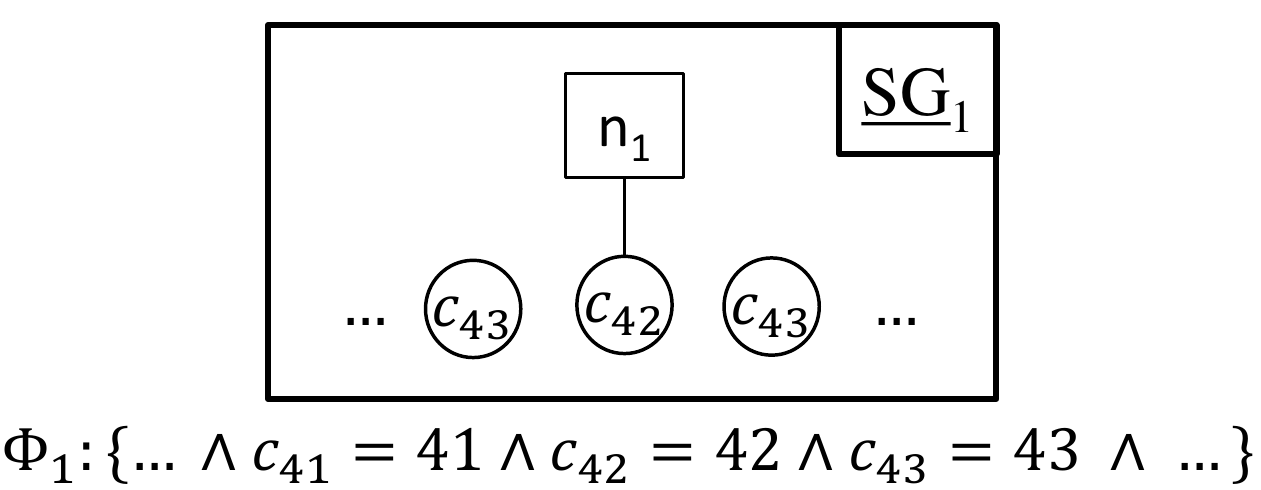}}\quad
\subfloat[Symbolic Graph \label{fig: ex01_symbgraph}]{%
\includegraphics[height=0.14\textwidth]{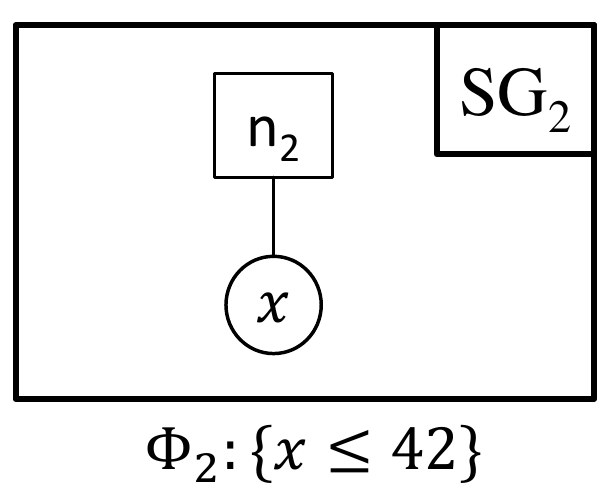}}\quad
\caption{Example of a Grounded Symbolic Graph and a (Non-Grounded) Symbolic Graph}
\end{figure}
\begin{exmp}[Symbolic and Grounded Symbolic Graphs]\normalfont
Figure \ref{fig: ex0_symbgraph} shows a grounded symbolic graph $\underline{SG}_1=(G_1,\Phi_1)$ consisting of a single graph node $\mathsf{n}_1$ bearing an attribute carrying the variable $c_{42}$, and a formula $\Phi_1$, constraining each variable $c_{v}$ to value $v\in D$. The grounded symbolic graph $\underline{SG}_1$ contains an infinite number of label nodes and corresponding equations as indicated by the '$\ldots$' in Figure~\ref{fig: ex0_symbgraph}. 

Figure \ref{fig: ex01_symbgraph} shows the (non-grounded) symbolic graph $SG_2=(G_2,\Phi_2)$ whose E-graph part is identical to $G_1$. Consequently, there exists an E-graph morphism $h:G_2\to G_1$ mapping nodes $\mathsf{n}_2$ and $x$ of $G_2$ to nodes $\mathsf{n}_1$ and $c_{42}$ of $G_1$, respectively. This morphism is a valid symbolic graph morphism as, according to the mapping of the label nodes ($h_\Phi(c_{42})=x$), the condition $\Phi_1 \Rightarrow h_\Phi(\Phi_2)$ can be simplified to $(x=42) \Rightarrow (x\le 42)$ which evaluates to true. Hence, the grounded symbolic graph $\underline{SG}_1$ is an instance of the symbolic graph $SG_2$.   
\end{exmp}

Pushouts and pullbacks in symbolic graphs can be defined in terms of pushouts and pullbacks for graphs \cite{OL10a}. More specifically, the symbolic morphisms $f:(A,\Phi_A) \to (B,\Phi_B)$ and $g:(A,\Phi_A) \to (C,\Phi_C)$ are a symbolic pushout $f':(B,\Phi_B) \to (D,\Phi_D)$ and $g':(C,\Phi_C) \to (D,\Phi_D)$ with pushout object $(P,\Phi_P)$ if $f'$ and $g'$  are a pushout in E-graphs and $\mathcal{D}\models(\Phi_P\Leftrightarrow f'_\Phi(\Phi_A) \land g'_\Phi(\Phi_C)$). A pullback is defined analogously where the formula $\Phi_A$ of the pullback object is given by the disjunction of $\Phi_B$ and $\Phi_C$.  

A symbolic graph transformation rule is a graph transformation rule additionally equipped with a first-order logic formula.
\begin{defi}[Symbolic Graph Transformation Rule and Symbolic Direct Derivation \cite{OL10a}]
\normalfont

A \emph{symbolic graph transformation rule} $r$ is a pair $(L \overset{l}{\leftarrow} K \overset{r}{\rightarrow} R, \Phi)$, where 
$(L \overset{l}{\leftarrow} K \overset{r}{\rightarrow} R)$ is an E-graph transformation rule and $\Phi$ is a single first-order logic formula shared by $L$, $K$ and $R$. The E-graph morphisms $l$ and $r$ are of a class $\mathcal{M}$ of morphisms injective for graph nodes and all kinds of edges and bijective for label nodes. 

A \emph{symbolic direct derivation} $SG \overset{r,m}\Longrightarrow SH$ is the application of a symbolic rule 
$r = (L \overset{l}{\leftarrow} K \overset{r}{\rightarrow} R, \Phi)$ on the symbolic graph $SG = (G,\Phi_G)$ 
at match $m: L \to SG$, resulting in the symbolic graph $SH = (H,\Phi_H)$, where $m$ is a symbolic graph morphism, which is injective for all kinds of nodes and edges except for the label nodes and $SH$ is 
produced as a DPO diagram in E-graphs.
\end{defi}

\begin{fact}[Properties of Symbolic Direct Derivations \cite{Lazy}]
\label{fac: Properties of symbolic direct derivations}
\normalfont
The restrictions on morphisms $l$ and $r$ ensure that for any symbolic direct derivation $SG \overset{r,m}\Longrightarrow SH$,
\begin{enumerate}[(i)]
  \item the set of label nodes and the formula remain unaltered, i.e., $V^D_G =V^D_H$ and $\mathcal{D}\models \Phi_G \Leftrightarrow \Phi_H$, and
  \item if $SG$ is grounded, then so is $SH$.
\end{enumerate}  
\end{fact}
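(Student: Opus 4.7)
The plan is to derive both parts of the fact from the structural restriction that the rule morphisms $l$ and $r$ are bijective on label nodes, combined with the hypothesis that the formula $\Phi$ is shared across $L$, $K$ and $R$.

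First I would address the structural half of (i). Since $l:K\to L$ and $r:K\to R$ are bijective on label nodes, $L$, $K$ and $R$ carry the same set of label nodes up to the canonical bijections induced by $l$ and $r$. Forming the pushout complement $D$ of pushout~(1) amounts to removing from $G$ the images under $m$ of elements in $L\setminus l(K)$; because $l$ is bijective on label nodes, this set contains no label nodes, so $V^D_G = V^D_D$. Likewise, gluing $R$ onto $D$ along $K$ in pushout~(2) adds to $D$ only those elements of $R$ not already present in $r(K)$, again containing no label nodes since $r$ is bijective on label nodes. Thus $V^D_H = V^D_D = V^D_G$.

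For the formula half of (i), I would invoke the characterization of symbolic pushouts given earlier, $\Phi_P \Leftrightarrow f'_\Phi(\Phi_A) \wedge g'_\Phi(\Phi_C)$. Applied to pushout~(1), with $\Phi_L = \Phi_K = \Phi$ shared, this gives $\Phi_G \Leftrightarrow m_\Phi(\Phi) \wedge d_\Phi(\Phi_D)$, which after the (identity) bijective renaming of label nodes reduces to $\Phi_G \Leftrightarrow \Phi \wedge \Phi_D$. The same reasoning applied to pushout~(2), using $\Phi_R = \Phi$, yields $\Phi_H \Leftrightarrow \Phi \wedge \Phi_D$. Comparing the two equivalences gives $\mathcal{D} \models \Phi_G \Leftrightarrow \Phi_H$, completing (i). Part~(ii) then follows immediately: groundedness is a property of the label nodes and formula only, so from $V^D_G = V^D_H$ and $\Phi_G \Leftrightarrow \Phi_H$ the required constants $c_v$ and their defining equations $c_v=v$ transfer directly from $SG$ to $SH$.

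The main obstacle I anticipate is the careful bookkeeping in the formula step: one must ensure that, after the two pushouts, the contribution of the interface $K$ (which shares $\Phi$) is not double-counted, and that the bijective renaming of label nodes along $l$, $r$, $m$ and the pushout morphisms is compatible so that $m_\Phi(\Phi)$, $r_\Phi(\Phi)$ and the copy of $\Phi$ already embedded in $\Phi_D$ can all be identified. Once this identification is justified from the $\mathcal{M}$-class restriction on $l$ and $r$, the logical equivalence $\Phi_G \Leftrightarrow \Phi_H$ falls out, and (ii) is a routine consequence of (i).
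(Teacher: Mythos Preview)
The paper does not supply a proof of this statement: it is recorded as a \emph{Fact} with a citation to~\cite{Lazy}, so there is no ``paper's own proof'' to compare against. Your proposal therefore cannot be matched to anything in the present paper, but it is worth assessing on its own.

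Your argument is sound in outline and would be accepted as a proof sketch. The structural half of~(i) is exactly right: bijectivity of $l$ and $r$ on label nodes means neither pushout~(1) nor pushout~(2) deletes or creates label nodes, so $V^D_G=V^D_D=V^D_H$. For the formula half, the idea of reading off $\Phi_G$ and $\Phi_H$ from the symbolic-pushout characterisation and comparing is also correct, and the derivation of~(ii) from~(i) is immediate as you say.

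There is one slip in the bookkeeping that you should tighten. When you write that $\Phi_G \Leftrightarrow m_\Phi(\Phi)\wedge d_\Phi(\Phi_D)$ ``after the (identity) bijective renaming of label nodes reduces to $\Phi_G\Leftrightarrow\Phi\wedge\Phi_D$'', you are implicitly treating $m_\Phi$ as a bijection; but the match $m$ is \emph{not} required to be injective (let alone bijective) on label nodes---only $l$ and $r$ are. What you actually need, and what does hold, is that $m_\Phi$ and the comatch $cm_\Phi$ coincide once $V^D_L$, $V^D_K$, $V^D_R$ are identified via $l_D$ and $r_D$: from $m\circ l=g\circ d$ and $cm\circ r=h\circ d$, together with $g_D$ and $h_D$ being identities on $V^D_D=V^D_G=V^D_H$, one gets $m_D=d_D\circ l_D^{-1}=d_D\circ r_D^{-1}=cm_D$. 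Hence $m_\Phi(\Phi)=cm_\Phi(\Phi)$, and since $m$ is a symbolic morphism we have $\Phi_G\Rightarrow m_\Phi(\Phi)$, so taking $\Phi_D:=\Phi_G$ makes both pushouts symbolic and yields $\Phi_H\Leftrightarrow cm_\Phi(\Phi)\wedge\Phi_G\Leftrightarrow m_\Phi(\Phi)\wedge\Phi_G\Leftrightarrow\Phi_G$. Your closing paragraph anticipates exactly this difficulty; the fix is to argue $m_\Phi=cm_\Phi$ rather than that either equals the identity.
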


\begin{figure}[tp]
\centering
\includegraphics[width=0.45\textwidth]{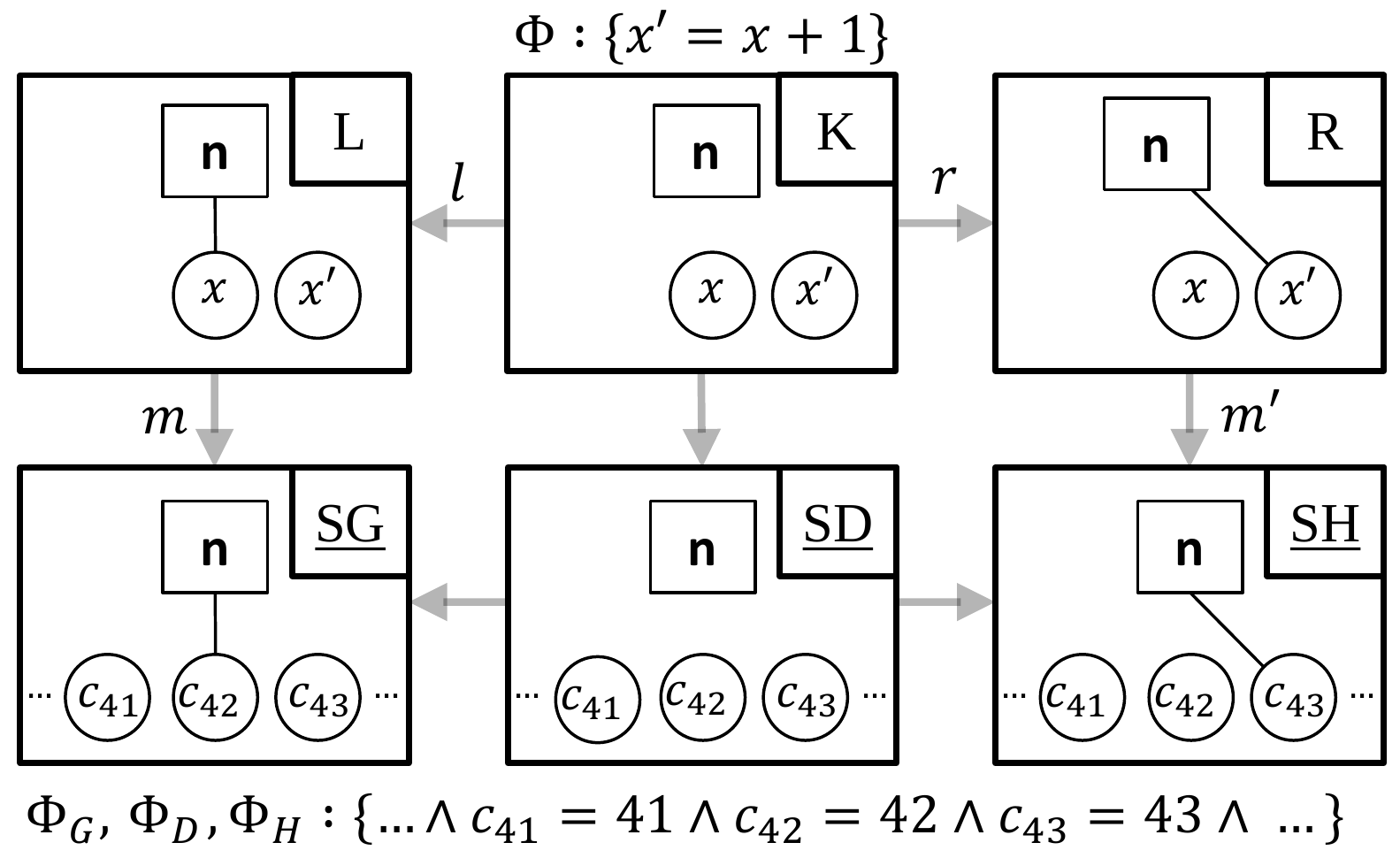}
\caption{Example of a Symbolic Direct Derivation}
\label{fig: Example of a symbolic direct derivation}
\end{figure}
Note that (i) also implies coincidence on label nodes of the match $m: L \to SG$ and the comatch $cm: R\to SH$, i.e., $m_\Phi=cm_\Phi$. 

Although, it seems counterintuitive at a first glance that we require $L$, $K$
and $R$ to share the same formula and set of label nodes, it does not mean that
attribute values cannot be changed by a rule application, since attribute values
are modified by redirecting label edges.
\begin{exmp}[Symbolic Graph Transformation Rule and Symbolic Direct Derivation]
\normalfont
Figure \ref{fig: Example of a symbolic direct derivation} shows a symbolic graph transformation rule
$r = (L \overset{l}{\leftarrow} K \overset{r}{\rightarrow} R, \Phi)$ (depicted in the upper part). The rule takes a graph node \textsf{n} that has at least one attribute (denoted by the label edge between \textsf{n} label node $x$) and increases it by one. This is achieved by introducing a new label node $x'$ to represent the attribute value after the rule application and constraining it to $x'=x+1$ as defined by the formula $\Phi$. The attribute value is changed from the old value $x$ to the new value $x'$ by first deleting the label edge between \textsf{n} and the old value $x$ and afterwards creating a new label edge assigning the new value $x'$ to \textsf{n}.
The result from applying the rule to grounded symbolic graph $\ul{SG}$ is shown on the bottom of Figure~\ref{fig: Example of a symbolic direct derivation}. The only valid mapping for match $m$ to satisfy $\Phi_G\Rightarrow m_\Phi(\Phi)$ is to map $x$ to $c_{42}$ and $x'$ to $c_{43}$. Then the resulting direct derivation $\ul{SG} \overset{r,m}\Longrightarrow \ul{SH}$ changes the attribute value from $42$ (in grounded symbolic graph $\ul{SG}$) to $43$ in grounded symbolic graph $\ul{SH}$ as expected.   
\end{exmp}

%
%
%

In the following, we use symbolic graphs and symbolic graph transformation to present our approach.

\section{A Conflict Notion for Graph Transformation with Attributes}
\label{03_Concept}

In this section, we present an improved detection technique for potential \emph{rule conflicts} for graph transformation with attributes.  
To this end, we define a notion of \emph{conflict on the level of direct derivations}, and we review parallel dependence as an existing sufficient condition for our notion of conflict. Thereupon, we show by means of an illustrative example that parallel dependence is too conservative especially in an attributed setting, i.e., rejecting too many conflict-free direct derivations. 
To overcome these deficiencies, we present a new condition, called \emph{direct confluence}, 
that is sufficient for detecting conflicting direct derivations, but less restrictive than parallel dependence. 
Finally, to reason about conflicts on the rule level, we lift the direct confluence condition by defining conflicting pairs.

%
%

%
With the concept of \emph{conflicts}, we grasp the situation where, given two rules ($r_1$ and $r_2$) applicable on the same graph, we obtain different results depending on which rule is applied first.
We characterize a conflict in terms of two alternative direct derivations that can not be arbitrarily serialized. In this case, applying the 
second transformation after the first leads to a different result than vice versa.
\begin{defi} [Conflict]\label{def: Conflict}
\normalfont
Given a grounded symbolic graph $\underline{SG}$, the two alternative direct derivations $\underline{SH}_1 \overset{r_1,m_1}{\Longleftarrow}\underline{SG}\overset{r_2,m_2}{\Longrightarrow}\underline{SH}_2$ are a \emph{conflict} if no direct derivations $\underline{SH}_1\overset{r_2,m_2'}{\Longrightarrow}\underline{SX}_1$ and  $\underline{SH}_2\overset{r_1,m_1'}{\Longrightarrow}\underline{SX}_2$ exist with $\underline{SX}_1$ and $\underline{SX}_2$ being isomorphic. 
\end{defi}
\noindent Note that since $\underline{SG}$ is grounded, $\underline{SH}_1$, $\underline{SH}_2$, $\underline{SX}_1$ and $\underline{SX}_2$ are grounded, too.

This definition of conflicts leaves open how to practically determine that two given alternative direct derivations are a conflict.
A corresponding condition to check if two direct derivations are a conflict is referred to as a \emph{conflict condition}.

\subsection{Parallel Dependence as a Conflict Condition}
\label{sec: Parallel Dependence}
In the literature of graph transformation, a common conflict condition is the notion of \emph{parallel dependence} \cite{Handbook1, Fundamentals}. 
Intuitively, two direct derivations are parallel dependent if they are mutually exclusive, i.e., after one of the direct derivations, the other rule is not applicable anymore and/or vice versa. We adapt the notion of parallel dependence to symbolic graphs as follows.

\begin{defi}[Parallel Dependence]\label{dependence}
\normalfont
The symbolic direct derivations $(H_1,\Phi) \overset{r_1,m_1}{\Longleftarrow} (G,\Phi)\overset{r_2,m_2}{\Longrightarrow} (H_2,\Phi)$ 
are parallel dependent iff the direct (E-graph) derivations $H_1 \overset{r_1,m_1}{\Longleftarrow} G \overset{r_2,m_2} {\Longrightarrow}H_2$ are parallel dependent, i.e., there does not exist E-graph morphism $i:L_1\rightarrow D_2$ or $j:L_2\rightarrow D_1$ such that $m_1 = g_2 \circ i$ and $m_2 = g_1 \circ j$, as in the diagram below.
\begin{figure}[H]
\begin{center}
  \includegraphics[width=0.58\textwidth]{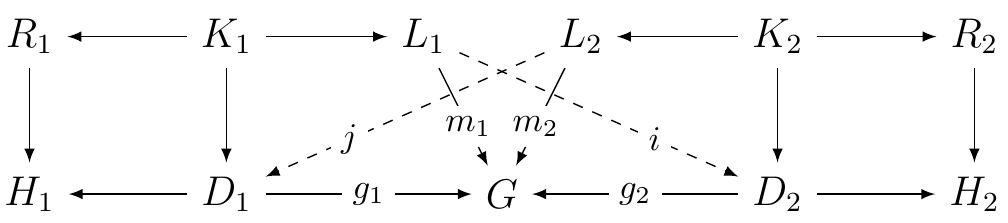}
\end{center}
\end{figure}
%
\noindent Two direct derivations not being parallel dependent are called \emph{parallel independent}.
\end{defi}
\noindent Note that the non-existence of morphism $i$ means that the application of rule $r_2$ deletes at least one element which is required for the match of $r_1$ and vice versa for $j$.  
  
\begin{figure}
\centering
\includegraphics[width=0.95\textwidth]{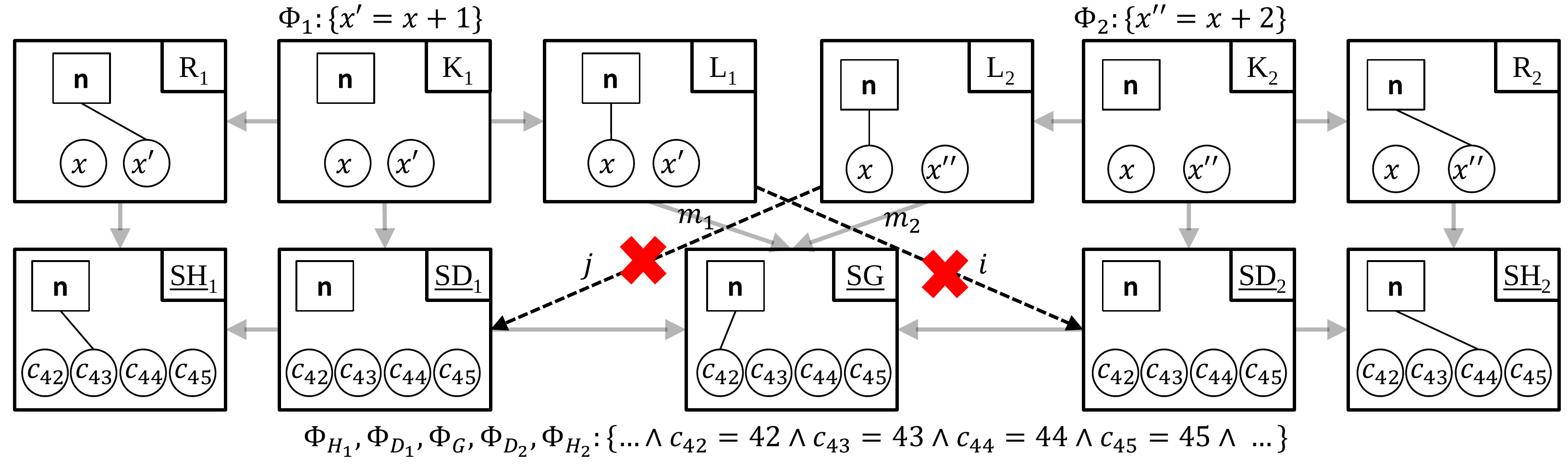}
\caption{Example of Parallel Dependent Direct Derivations}
\label{figpardep}
\end{figure}

\begin{exmp}[Parallel Dependence]\label{expardep}
\normalfont

Figure \ref{figpardep} shows an example of two parallel dependent direct derivations. 
The two symbolic rules $r_1 = (L_1 \overset{l_1}{\leftarrow} K_1 \overset{r_1}{\rightarrow} R_1,\Phi_1)$ and $r_2 = (L_2 \overset{l_2}{\leftarrow} K_2 \overset{r_2}{\rightarrow} R_2,\Phi_2)$ are shown in the upper part of the figure. Both rules take a single graph node \textsf{n} with a single attribute (label node $x$); while rule $r_1$ increases the value of the attribute by $1$, rule $r_2$ adds $2$ to the attribute value. The bottom part of Figure~\ref{figpardep} shows the application of the rules on the grounded symbolic graph $\ul{SG}$.
As the morphisms $i:L_1\rightarrow D_2$ and $j:L_2\rightarrow D_1$ do not exist because of a missing labeling edge, the depicted direct 
derivations are parallel dependent and, therefore, they are declared to be a conflict by parallel dependence. 

However, if focusing on the intention of these rules, 
it seems rather intuitive that the direct derivations are not a conflict as the operations expressed by the rules are commutative, i.e., $x+1+2=x+2+1$.
\end{exmp}

Concluding our example, although this technique is practical, efficient and only the two direct derivations are required for the decision process, it seems too strict (i.e., it produces too many false positives) for the desired attributed setting. The problem is that using the notion of parallel dependence, two rules are considered to have a (potential) conflict whenever an attribute is modified by one rule, that is accessed by the other rule (as also stated in \cite{ConflOfAG}). The root of the problem resides in the construction of the underlying E-graphs, which do not reflect the intention of attribute operations, but rather delete and recreate the labeling edges whenever a new value is assigned to an attribute.

\subsection{Direct Confluence as an Improved Conflict Condition}
\label{sec: Direct Confluence}

To overcome the deficiencies of parallel dependence as a conflict condition, we propose an alternative approach. 
Our proposal is based on the observation that the definition of conflicts (Def. \ref{def: Conflict}) allows for \emph{directly checking if the different application 
sequences of the two rules result in isomorphic graphs}. In particular, the proposed approach relies on 
our notion of \emph{direct confluence}. To be more precise, 
two direct derivations which are \emph{not directly confluent} are a conflict.

The definition of direct confluence has to fulfill that (i) given a pair of direct derivations for two rules $r_1$ and $r_2$ on the same input graph, there exists two derivation sequences (i.e. first $r_1$ and then $r_2$ and vice versa) whose resulting graphs are isomorphic and (ii) in both derivation sequences, the second direct derivations preserves at least the elements as the first direct derivations and send these to the same elements in the common result. 

\begin{defi}[Direct Confluence]\label{def: dirconfl}
\normalfont

Given a pair of direct derivations $SH_1 \overset{r_1,m_1}{\Longleftarrow} SG \overset{r_2,m_2}{\Longrightarrow} SH_2$ with $SG = (G,\Phi_G)$, 
$SH_1 = (H_1,\Phi_{H_1})$ and $SH_2 = (H_2,\Phi_{H_2})$ being symbolic graphs, 
they are \emph{directly confluent} 
if there exist direct derivations $SH_1 \overset{r_2,m'_2}{\Longrightarrow} SX_1$ and 
$SH_2 \overset{r_1,m'_1}{\Longrightarrow} SX_2$ such that 
\begin{enumerate}[I.]
\item\label{cond1} $SX_1 = (X_1,\Phi_{X_1})$ and $SX_2 = (X_2,\Phi_{X_2})$ are isomorphic, and
\item\label{cond2} matches $m'_1$ and $m'_2$ are chosen in a way that (2), (3) and (4) commute, where 
(1) is the pullback of ($SD_1 \rightarrow SG \leftarrow SD_2$) and the graphs $SD_1$, $SD_2$, $SQ_1$ and $SQ_2$ are the context graphs of the corresponding direct derivations.
\end{enumerate}
\begin{figure} [H]
\centering
    \subfloat[Property \ref{cond1}]{\label{fig:sf1}
      \includegraphics[height=0.24\textwidth]{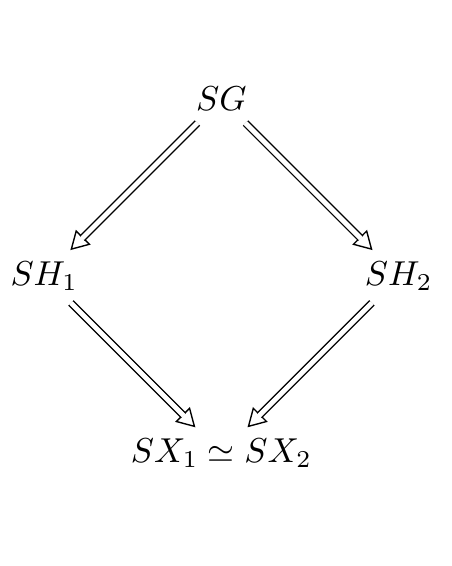}} 
    \subfloat[Property \ref{cond2}]{\label{fig:sf2}
      \includegraphics[height=0.24\textwidth]{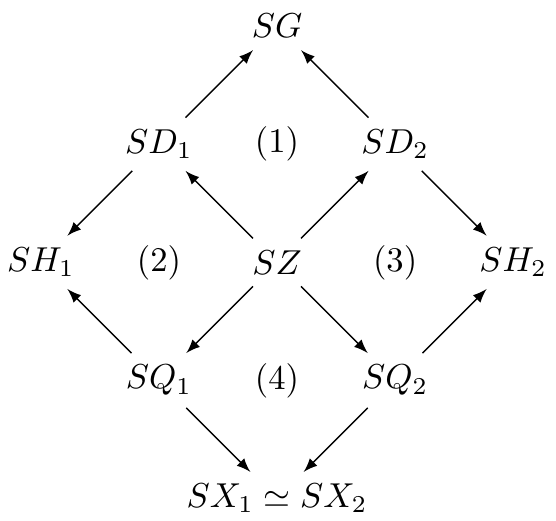}} 
\end{figure}
\end{defi}

Property \ref{cond1} ensures that the given direct derivations are not a conflict. 
Property \ref{cond2} serves as a means of tracking for the matched elements after the direct derivations. This way, it is guaranteed that the second direct derivations are applied to the images of the same elements as the first ones. In other words, the symbolic graph $SZ$ contains all elements from the input graph that are preserved by \emph{both} original direct derivations and the commuting rectangles of Property~\ref{cond2} guarantee that these elements are in the context graphs of the second direct derivations and (through the lower rectangle) that they are embedded in the resulting graph in the same way. In the following, when using 
the concept of direct confluence, we always assume that the matches are chosen appropriately according to Property \ref{cond2}. Note that the definition of direct confluence is a specialization of strict confluence as defined in~\cite{Fundamentals} (Def.~6.26), with the lower transformation chains consisting of exactly one direct derivation.
\begin{figure}[htbp]
\centering
\includegraphics[width=0.95\textwidth]{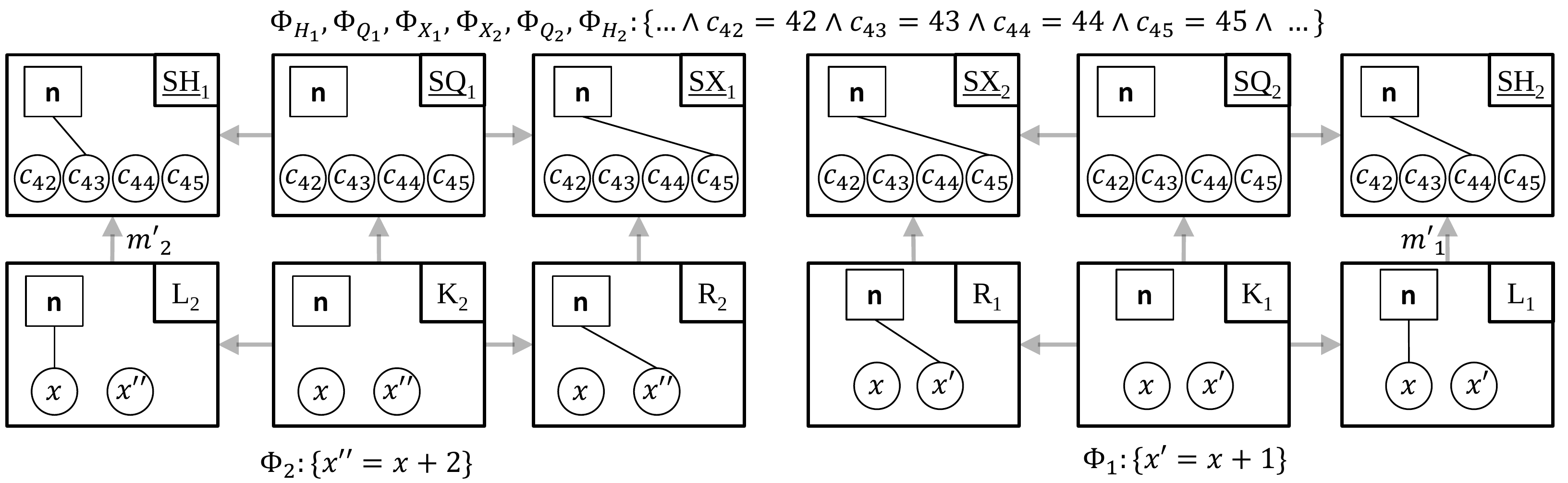}
\caption{Example of Direct Confluence}
\label{fig: Example of direct confluence}
\end{figure}

\begin{exmp}[Direct Confluence as an Improved Conflict Condition]\label{ex: Direct confluence as a refined conflict condition}
\normalfont

Figure~\ref{fig: Example of direct confluence} shows (in the top right and top left corner) the results $\ul{SH}_1$ and $\ul{SH}_2$ of the alternative direct derivations $\ul{SH}_1 \overset{r_1,m_1}{\Longleftarrow} \ul{SG} \overset{r_2,m_2} {\Longrightarrow}\ul{SH}_2$ presented in Example~\ref{expardep} (shown in Figure~\ref{figpardep}). On the bottom (from left to right), the symbolic rules $r_2 = (L_2 \overset{l_2}{\leftarrow} K_2 \overset{r_2}{\rightarrow} R_2,\Phi_2)$  and $r_1 = (L_1 \overset{l_1}{\leftarrow} K_1 \overset{r_1}{\rightarrow} R_1,\Phi_1)$ are shown. In order to check direct confluence, both rules are applied to $\ul{SH}_1$ and $\ul{SH}_2$, resulting in the direct derivations  $\ul{SH}_1 \overset{r_2,m'_2}{\Longrightarrow} \ul{SX}_1$ and $\ul{SH}_2 \overset{r_1,m'_1}{\Longrightarrow} \ul{SX}_2$. As grounded symbolic graphs $\ul{SX}_1$ and $\ul{SX}_2$ are isomorphic, direct confluence declares, in contrast to parallel dependence, that the two alternative derivation $\underline{SH}_1 \overset{r_1,m_1}{\Longleftarrow}\underline{SG}\overset{r_2,m_2}{\Longrightarrow}\underline{SH}_2$ are not a conflict.

\end{exmp}

We have shown that direct confluence as a conflict condition is in accordance with our notion of conflicts and is, therefore, suitable for conflict detection in the presence of attributes. However, in most applications, one is rather interested in a conflict detection on the level of rules instead of their applications.

\subsection{Lifting Conflicts to Rule Level}

In the following, we show how we lift our notion of direct confluence from the direct derivation level to the rule level. As a starting point, we recall the well-known concept of \emph{critical pairs} that is used to lift the parallel dependence condition to the rule level. First, we adapt critical pairs to our setting of symbolic graphs. Afterwards, we show that this criterion is too conservative, however, it is used as a first necessary condition in the decision process as if two rules are parallel independent, they are also directly confluent (note that this does not necessarily hold the other way around). To improve conflict detection, we proceed by showing how direct confluence can be lifted to an adequate rule conflict condition in the presence of attributes. 

A critical pair for two given rules 
consists of a \emph{minimal context} and two parallel dependent direct derivations.
%
A \emph{minimal context} of two rules is a graph (i) on which both rules are applicable and (ii) which only contains elements being matched by at least one of the rules. 
The intention behind critical pairs 
essentially consists in identifying those minimal conflict instances representing \emph{each} possible conflict of the rules on any possible input graph. Practically, this requirement means that whenever two direct derivations are a conflict on some graph $SG$, there is an element in the corresponding set of minimal conflict instances which is embedded in $SG$. \emph{Embedding} one pair of direct derivations (with input graph $SK$) into another pair of direct derivations (with input graph $SG$) means that there exist monomorphisms from the graphs of the first pair of derivations to the graphs of the second one.

The definition of critical pairs has only been considered in the framework of plain and attributed graphs before \cite{Fundamentals}. 
Nevertheless, it can be extended to symbolic graphs as follows.

\begin{defi}[Symbolic Critical Pair]\label{def: scp}
\normalfont

A pair of symbolic rule applications $SP_1 \overset{r_1,o_1}{\Longleftarrow} SK \overset{r_2,o_2}{\Longrightarrow} SP_2$ 
with rules $r_1 = (L_1 \overset{l_1}{\leftarrow} K_1 \overset{r_1}{\rightarrow} R_1, \Phi_1)$ and 
$r_2 = (L_2 \overset{l_2}{\leftarrow} K_2 \overset{r_2}{\rightarrow} R_2, \Phi_2)$ on the input graph $SK = (K,\Phi_K)$ 
is a \emph{symbolic critical pair} if it is parallel dependent, $\mathcal{D}\models\Phi_K \Leftrightarrow o_{1,\Phi}(\Phi_1) \wedge o_{2,\Phi}(\Phi_2)$, and $K$ is 
minimal meaning that each E-graph element $ge \in K$ (i.e., node or edge in $K$) has a pre-image in the LHS of rule $r_1$ or $r_2$, i.e., $ge \in o_1(L_1)$ or $ge \in o_2(L_2)$. 
\end{defi}

%

\begin{exmp}[Symbolic Critical Pair]\label{ex: Critical pair}
\normalfont
Figure~\ref{fig: Example of a critical pair} provides an example for a symbolic critical pair according to Definition~\ref{def: scp}. Again, we consider the rules $r_1$ and $r_2$ shown in the upper part of the figure. Contrary to the example for parallel dependence, the rules are now applied to the minimal context $SK$ that contains only the elements required for applying the rules $r_1$ and $r_2$. As the resulting pair of direct derivations $SP_1 \overset{r_1,o_1}{\Longleftarrow} SK \overset{r_2,o_2}{\Longrightarrow} SP_2$ can be embedded into the direct derivations $\ul{SH}_1 \overset{r_1,m_1}{\Longleftarrow} \ul{SG} \overset{r_2,m_2} {\Longrightarrow} \ul{SH}_2$ of Example~\ref{expardep}, the pair $SP_1 \overset{r_1,o_1}{\Longleftarrow} SK \overset{r_2,o_2}{\Longrightarrow} SP_2$ is a minimal conflict instance of the conflict $\ul{SH}_1 \overset{r_1,m_1}{\Longleftarrow} \ul{SG} \overset{r_2,m_2} {\Longrightarrow} \ul{SH}_2$.

\begin{figure}
\centering
\includegraphics[width=0.93\textwidth]{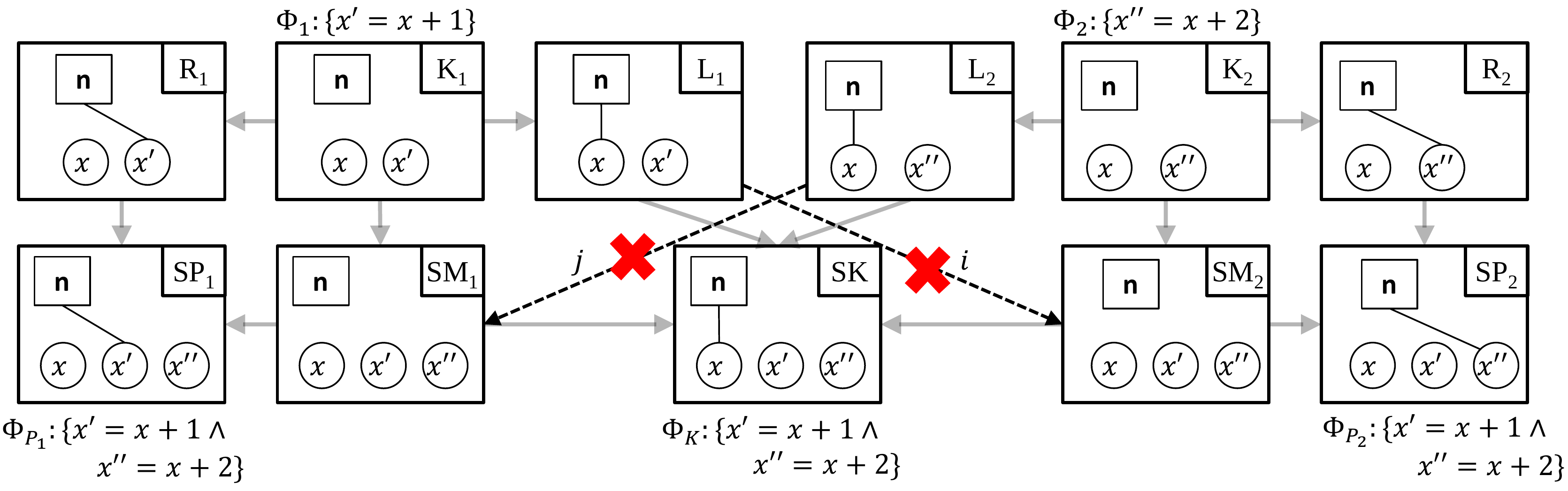}
\caption{Example of a Critical Pair}
\label{fig: Example of a critical pair}
\end{figure}

\end{exmp}

This example has shown that the parallel dependence condition can be lifted to rule level by the concept of symbolic critical pairs. Analogously, we also lift the direct confluence condition 
to the level of rules instead of direct derivations, using a construction similar to minimal contexts. Unfortunately, 
when considering (general) symbolic graphs and symbolic graph transformation, a general problem arises when checking direct confluence, as is illustrated in the following example.

\begin{exmp}[Problem of Checking Direct Confluence]\label{exdc}
\normalfont

\begin{figure}[htbp]
\centering
\includegraphics[width=0.93\textwidth]{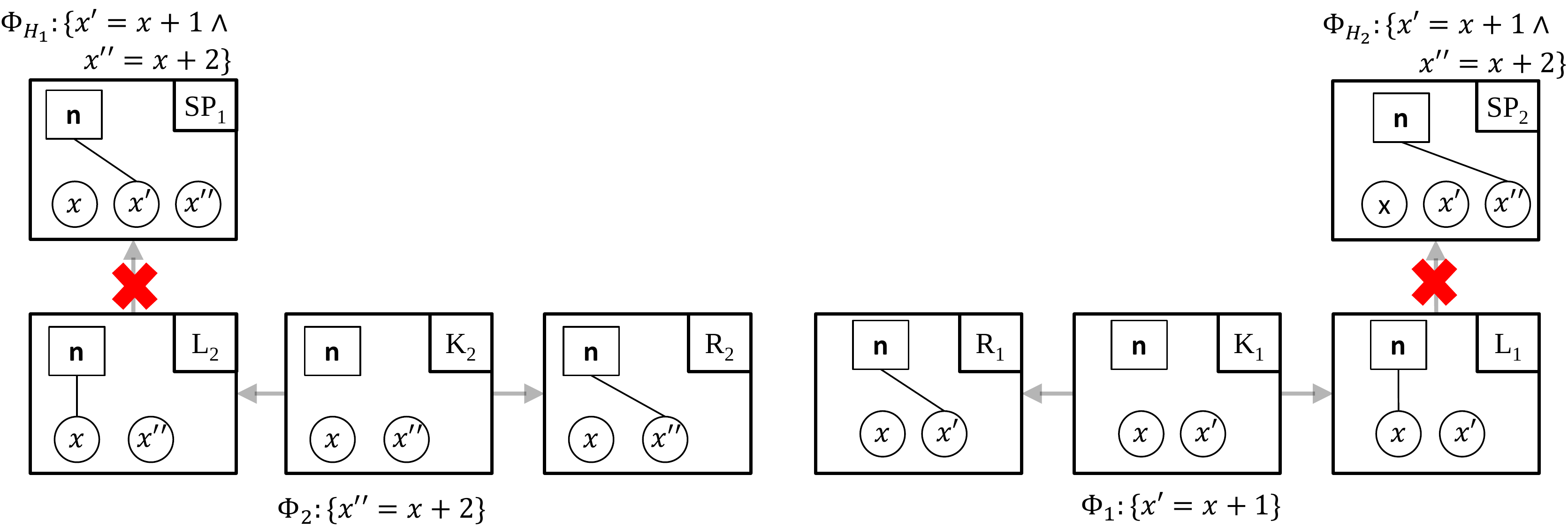}
\caption{Problem of Checking Direct Confluence}
\label{fig: Example of direct confluence2}
\end{figure}

Figure~\ref{fig: Example of direct confluence2} shows (in the upper part) the results $SP_1$ and $SP_2$ of the alternative direct derivations $SP_1 \overset{r_1,o_1}{\Longleftarrow} SK \overset{r_2,o_2} {\Longrightarrow}SP_2$ presented in Example~\ref{ex: Critical pair} (shown in Figure~\ref{fig: Example of a critical pair}). On the bottom (from left to right), the symbolic rules $r_2 = (L_2 \overset{l_2}{\leftarrow} K_2 \overset{r_2}{\rightarrow} R_2,\Phi_2)$ and $r_1 = (L_1 \overset{l_1}{\leftarrow} K_1 \overset{r_1}{\rightarrow} R_1,\Phi_1)$ are shown. In order to check direct confluence, both rules have to be applied to $SP_1$ and $SP_2$. However, this is not possible. If we want to find a symbolic match $o'_2: (L_2,\Phi_2) \rightarrow SP_1$ from the left-hand side of rule $r_1$ defined by $(L_2,\Phi_2)$ to the symbolic graph $SP_1=(P_1,\Phi_{P_1})$, we have to map label node $x$ of $L_2$ to label node $x'$ of $SP_1$. Mapping $x'$ of $L_2$ to $SP_1$ introduces two problems. The first problem is that no mapping of the label node $x''$ of $L_2$ to a label node in $SP_1$ exists such that $\dalg \models (\Phi_{P_1} \Rightarrow o'_{2,\Phi}(\Phi_2))$. We can overcome this problem by assuming that $SP_1$ still includes an additional variable, not assigned to any node or edge and not appearing in the formula of $SP_1$. Generally, we assume from now on that a symbolic graph also contains an unlimited number of variables. Nevertheless, we have a second problem: we still cannot apply $r_2$ to $SP_1$ because $x'=x+1 \land x''=x+2$ does not imply $o'_{2,\Phi}(\Phi_2)$ which is $x''' = x'+2$, where $x'''$ is the new additional variable for mapping $x''$ of $L_2$ to $P_1$ (i.e., $m_\Phi(x'')=x'''$).


\end{exmp}

This problem in Example~\ref{exdc} can be solved by \emph{narrowing graph transformation} \cite{Lazy}. Instead of requiring that $\Phi_{P_1} \Rightarrow o'_{2,\Phi}(\Phi_2)$ holds before the transformation (as in the case of symbolic direct derivation), in the narrowing case, the transformation of the E-graph part is performed first and, afterwards, the satisfiability of $\Phi_{P_1} \land o'_{2,\Phi}(\Phi_1)$ is checked to ensure that the resulting symbolic graph has at least one instance.   

\begin{defi}[Narrowing Graph Transformation \cite{Lazy}]
\label{def: Narrowing graph transformation}
\normalfont
Given a symbolic graph $SG=(G,\Phi_{G})$, a symbolic graph transformation rule $r = (L \leftarrow K \rightarrow R, \Phi)$ and an E-graph morphism $m: L \rightarrow G$, 
the \emph{narrowing direct derivation} of the rule $r$ on $SG$ at match $m$, denoted as $SG \Rrightarrow_{r,m} SH$, 
leading to symbolic graph $SH=(H,\Phi_{H})$, is given by the (E-graph) double pushout diagram below:
\begin{figure}[H]
\centering
\includegraphics[width=0.27\textwidth]{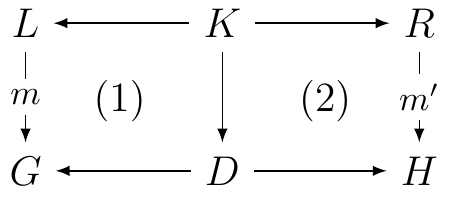}
\end{figure}
\noindent such that 
$\Phi_{H}:= \Phi_{G} \land m'_\Phi(\Phi)$ is satisfiable. 
\end{defi}

Now, we lift the notion of direct confluence to the rule level by using narrowing graph transformation.
\begin{defi}[\BNCP Pair]
\label{def: NCP}
\normalfont

A symbolic critical pair 
$SCP = SP_1 \overset{r_1,o_1}{\Longleftarrow} SK \overset{r_2,o_2}{\Longrightarrow} SP_2$  is a \emph{\NCP pair}
if there do not exist narrowing direct derivations $SP_1 \Rrightarrow_{r_2,o'_2} SX_1$ and $SP_2 \Rrightarrow_{r_1,o'_1} SX_2$ such that $SCP$ is directly confluent.


\end{defi}

Having these new concepts at hand, we can now revisit the concurrent applications of Example \ref{expardep} 
to see if a conflict detection based on \NCP pairs is now capable of handling that situation.

\begin{figure}[htbp]
\centering
\includegraphics[width=0.95\textwidth]{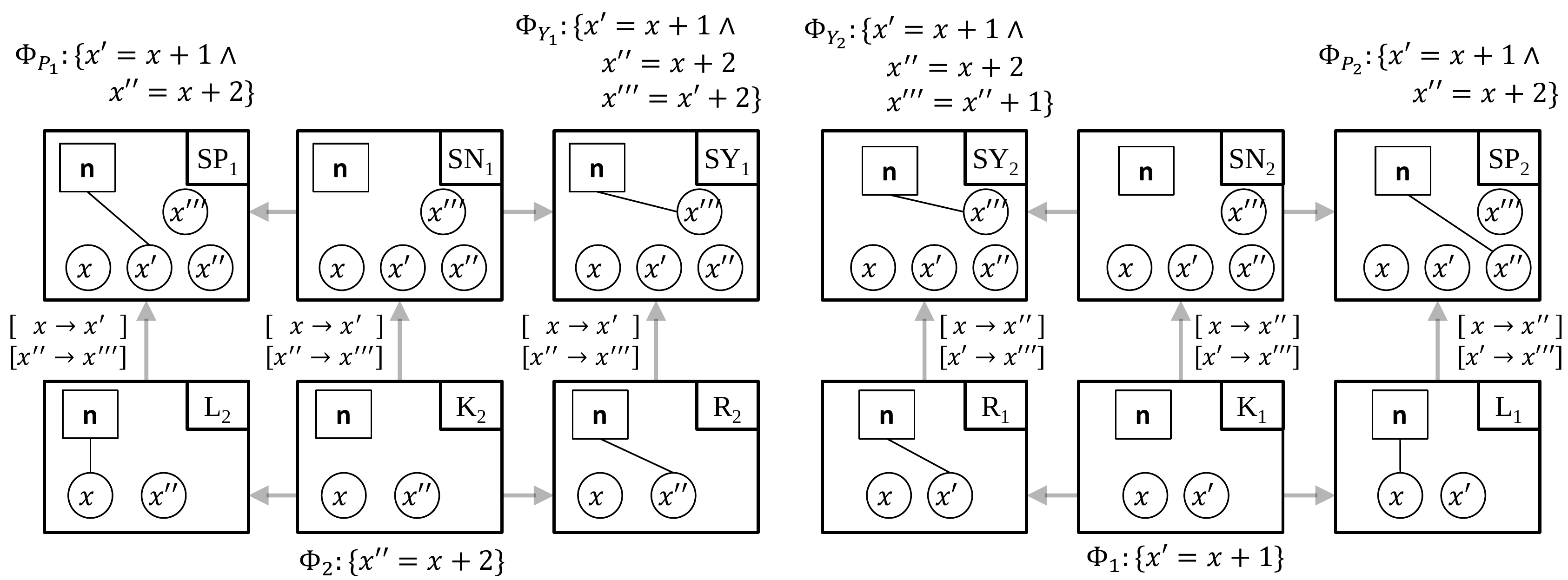}
\caption{Example of a Non-\NCP Pair}
\label{fig: Example NCP}
\end{figure}

\begin{exmp}[Non-conflicting Pair]\label{ex2}
\normalfont
Figure~\ref{fig: Example NCP} depicts the construction process for a \NCP pair according to Definition~\ref{def: NCP}, where $SP_1$ and $SP_2$ are part of the critical pair $SP_1 \overset{r_1,o_1}{\Longleftarrow} SK \overset{r_2,o_2}{\Longrightarrow} SP_2$ derived in Example~\ref{ex: Critical pair} (Figure~\ref{fig: Example of a critical pair}). Contrary to the previous example (Example~\ref{exdc}) the rules $r_1$ and $r_2$ (depicted at the bottom right and right of Figure~\ref{fig: Example NCP}, respectively) are now applied using narrowing transformation as defined in Definition~\ref{def: Narrowing graph transformation}. We also assume that symbolic graphs $SP_1$ and $SP_2$ both include a new label node $x'''$, which is used as image of the label nodes $x''$ and $x'$ in the (E-graph) matches $o'_2:L_2 \rightarrow P_1$ and $o'_1:L_1 \rightarrow P_2$, respectively. These mappings are depicted by the captions $[x'' \rightarrow x''']$ and $[x'\rightarrow x''']$ at the corresponding morphism arrows in Figure~\ref{fig: Example NCP}, respectively. The other mappings are depicted similarly, if the mapping differs from the mapping given by the node identifiers. 
The graphs $SY_1$ and $SY_2$ contain the results of the direct narrowing derivations of $r_1$ and $r_2$ at the matches $o'_1$ and $o'_2$. Consequently, the formula $\Phi_{Y_1}:=\Phi_{P_1}\land o'_{1,\Phi}(\Phi_2)$ can be simplified to $\Phi_{Y_1}:=\{x'=x+1 \land x''=x+2 \land x'''=x'+2\}$ as we have mapped $x$ to $x'$ and $x''$ to $x'''$. Having $\Phi_{Y_2}$ transformed similarly, we have $\Phi_{Y_1}:=\{x'''=x+1+2 \land x'' = x+2\}$ and $\Phi_{Y_2}:=\{x'''=x+2+1 \land x''=x+2\}$ which are equivalent. Hence, symbolic graphs $SY_1$ and $SY_2$ are isomorphic as both have the same graph structure and equivalent formulas.    

Concluding the example, direct confluence as a conflict condition can be used on the rule level as well, if we adapt the way how graph transformation is performed.
%
%
%
%

\end{exmp}
\section{An Improved Conflict Detection Process based on Direct Confluence}
\label{04_Algo}




%
%
%

The notion of \NCP pairs (Def.~\ref{def: NCP}) provides a basis for an improved conflict detection process. In this section, 
we describe this process. Thereupon, we show that the resulting set of \NCP pairs is \emph{complete} in the usual sense, i.e., 
whenever there is a conflict, we have a \NCP pair embedded in the input graph, which represents the cause 
of the conflict \cite{Fundamentals}.

A conflict detection based on \NCP pairs is not completely independent of 
a (classical) conflict detection based on critical pairs, but rather can be conceived as an extension to it. 
Such a conflict detection is performed on the rule level instead of the direct derivation level. 
Figure \ref{figdectree} summarizes the decision procedure.

In particular, given a pair of symbolic rules $r_1 = (L_1 \overset{l_1}{\leftarrow} K_1 \overset{r_1}{\rightarrow} R_1,\Phi_1)$ and 
$r_2 = (L_2 \overset{l_2}{\leftarrow} K_2 \overset{r_2}{\rightarrow} R_2,\Phi_2)$, the overall process consists of the following steps:
\begin{enumerate}[1.]
\item A symbolic critical pair (Def. \ref{def: scp}) is constructed if possible, based on $L_1,L_2$ and the matches. 
If the graph parts of $L_1$ and $L_2$ are non-overlapping, or $\mathcal{D} \not\models o_{1,\Phi}(\Phi_1) \wedge o_{2,\Phi}(\Phi_2)$ holds,
there is no \NCP pair based on these two rules 
and the process terminates. 
Note that, for the E-graph part, there is always at least one minimal graph according to Def. \ref{def: scp}.
\item If an appropriate $SK = (K,\Phi_K)$ with a minimal $K$ has been found in step 1, 
the direct derivations $SK \overset{r_1,o_1}{\Longrightarrow} SP_1$ and 
$SK \overset{r_2,o_2}{\Longrightarrow} SP_2$ (with the unique matches $o_1$ and $o_2$) are to be checked for parallel dependence. In case 
they are parallel independent, there is no \NCP pair based on these two rules 
and the process terminates.
\item The rules are applied in both sequences to $SK$; in case they are \emph{not} directly confluent, 
then $SK$, the rules $r_1$ and $r_2$ and their (unique) matches constitute a \emph{\NCP pair}.
\end{enumerate}

\begin{figure}[tp]
\centering
\includegraphics[width=0.73\textwidth]{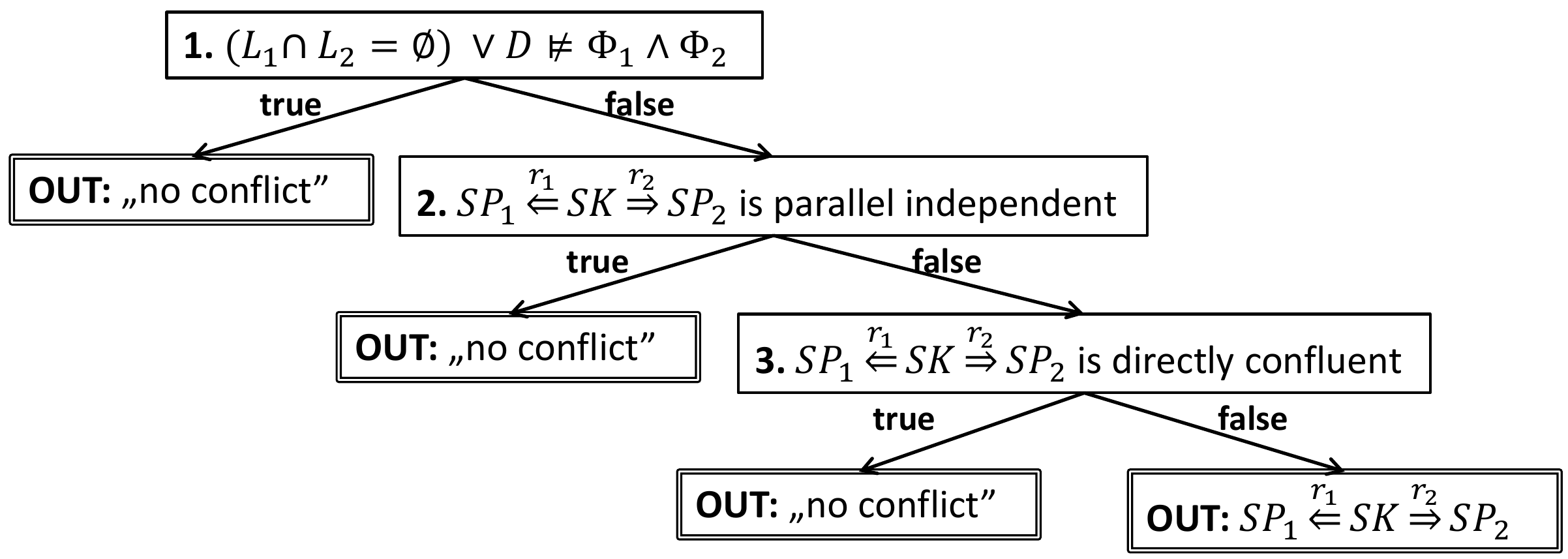}
\caption{Sketch of the Decision Procedure}
\label{figdectree}
\end{figure}

In the following, we prove that a conflict detection process defined this way is complete, i.e., 
when applied to a set of rules, the resulting set of \NCP pairs represents all possible conflict causes. 
This means that if for an arbitrary (symbolic) graph $SG$, two direct derivations are not directly confluent, 
then a corresponding \NCP pair is embedded within $SG$. In our proof, we rely on the construction of initial 
pushouts in symbolic graphs, analogously to the proof of Theorem~6.28 in \cite{Fundamentals}.
 \newpage
\begin{defi}[Construction of Initial Pushouts in Symbolic Graphs]\label{def: ip}
\normalfont

The diagram below is an \emph{initial pushout in symbolic graphs} if (i) the morphisms $b,c \in \mathcal{M}$,
(ii) it is an initial pushout in E-graphs (see Def.~6.1 in~\cite{Fundamentals}) and (iii) $\dalg \models (\Phi_B \Leftrightarrow \Phi_Y)$ and $\dalg \models (\Phi_C \Leftrightarrow \Phi_{X})$.

\begin{figure}[H]
\centering
\includegraphics[width=0.27\textwidth]{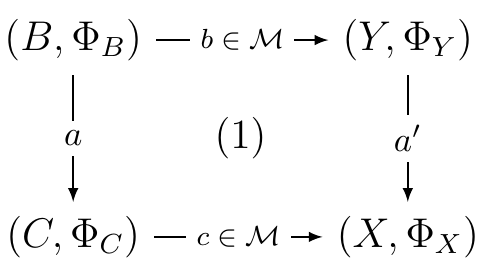}
\end{figure}

\end{defi}

\begin{thm}[Completeness of \BNCP Pairs]\label{thmcompleteness}

Given a grounded symbolic graph $\ul{SG}$ and a pair of \emph{not directly confluent} direct derivations $Der_G = (\ul{SH}_1 \overset{r_1,m_1}{\Longleftarrow} \ul{SG} \overset{r_2,m_2}{\Longrightarrow} \ul{SH}_2)$ of rules $r_1 = (L_1 \overset{l_1}{\leftarrow} K_1 \overset{r_1}{\rightarrow} R_1,\Phi_1)$ and 
$r_2 = (L_2 \overset{l_2}{\leftarrow} K_2 \overset{r_2}{\rightarrow} R_2,\Phi_2)$, there exists a \NCP pair 
$Der_K = (SP_1 \overset{r_1,o_1}{\Longleftarrow} SK \overset{r_2,o_2}{\Longrightarrow} SP_2)$
such that $Der_K$ can be embedded in $Der_G$ by $f: SK \to \ul{SG}$, $g: SP_1 \to \ul{SH}_1$ and $h: SP_2 \to \ul{SH}_2$ 
shown in the diagram:

\begin{figure}[H]
\centering
\includegraphics[width=0.40\textwidth]{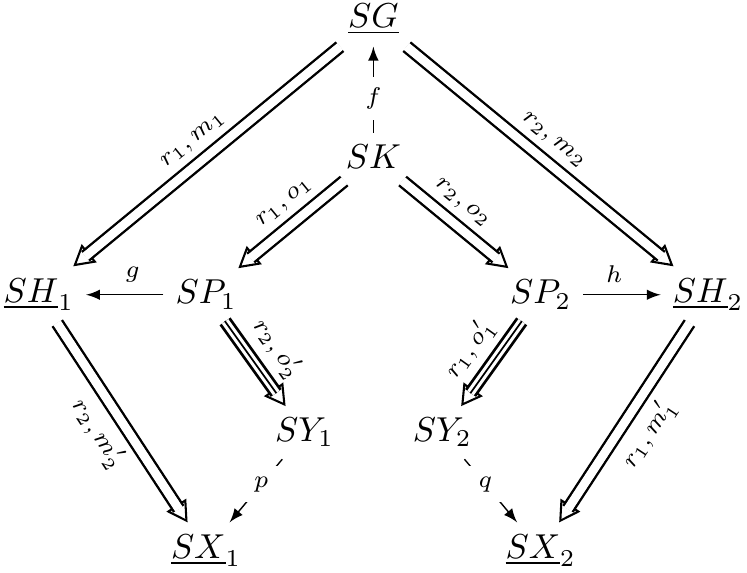}
\end{figure}

\end{thm}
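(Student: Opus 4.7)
The plan is to mirror the completeness proof for (plain) critical pairs from Theorem~6.28 of \cite{Fundamentals}, adapting it to the symbolic setting and to the stronger conflict notion of direct confluence. A preliminary reduction I would use is the Local Church--Rosser phenomenon in the symbolic DPO category: if $Der_G$ were parallel independent, the morphisms $i$ and $j$ of Definition~\ref{dependence} would yield matches $g_2 \circ i$ and $g_1 \circ j$ of $r_2$ on $\ul{SH}_1$ and $r_1$ on $\ul{SH}_2$, producing isomorphic results together with tracking morphisms that satisfy Property~\ref{cond2} by the universal properties of the involved pushouts. Hence direct confluence would follow, contradicting the hypothesis. So we may assume that $Der_G$ is parallel dependent throughout the rest of the proof.

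The central construction is the minimal symbolic context $SK$. I would first apply Definition~\ref{def: ip} to obtain symbolic initial pushouts over $m_1$ and $m_2$, yielding boundary E-graphs $B_1, B_2$ and context morphisms $C_1 \to \ul{SG}$, $C_2 \to \ul{SG}$. Then I would build $K$ as the gluing (pushout) of $B_1$ and $B_2$ along the pullback of $C_1 \to \ul{SG} \leftarrow C_2$; this is the smallest E-graph containing exactly the parts of $L_1$ and $L_2$ that $\ul{SG}$ forces to exist for both matches. To promote $K$ to a symbolic graph I would set $\Phi_K := o_{1,\Phi}(\Phi_1) \wedge o_{2,\Phi}(\Phi_2)$, with $o_1, o_2$ the induced E-graph matches; satisfiability is witnessed by the variable assignment inherited from the grounded data of $\ul{SG}$. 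By Definition~\ref{def: scp}, the induced derivations $SP_1 \overset{r_1,o_1}{\Longleftarrow} SK \overset{r_2,o_2}{\Longrightarrow} SP_2$ form a symbolic critical pair, with parallel dependence following by contraposition along the embedding: any $i': L_1 \to D_2^K$ could be composed with $D_2^K \hookrightarrow D_2^G$ to yield the forbidden $i: L_1 \to D_2^G$.

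It remains to exhibit the embeddings $f,g,h$ and to show that the constructed critical pair is a \NCP pair. The morphism $f: SK \to \ul{SG}$ arises as the unique mediator from the pushout defining $K$ into $\ul{SG}$; the morphisms $g$ and $h$ are then obtained by extending the DPO squares for $r_1$ and $r_2$ along $f$, using that pushout complements compose. For the \NCP property I would argue by contradiction: suppose narrowing derivations $SP_1 \Rrightarrow_{r_2,o'_2} SX_1$ and $SP_2 \Rrightarrow_{r_1,o'_1} SX_2$ witness direct confluence of the critical pair with $SX_1 \cong SX_2$. Composing with $g$ and $h$ provides E-graph matches $g \circ o'_2$ and $h \circ o'_1$ into $\ul{SH}_1$ and $\ul{SH}_2$; because $\ul{SH}_1$ and $\ul{SH}_2$ are grounded (Fact~\ref{fac: Properties of symbolic direct derivations}), the narrowing satisfiability conditions are automatically promoted to the implication required for ordinary symbolic direct derivations, whose results contain $SX_1$ and $SX_2$ as pushout extensions over a common interface and are therefore isomorphic. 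The commuting rectangles of Property~\ref{cond2} transport along $f, g, h$, so $Der_G$ would be directly confluent, contradicting the hypothesis.

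The main obstacle I expect lies in the last step: justifying rigorously that the narrowing derivations over $SP_1$ and $SP_2$ lift to genuine symbolic direct derivations over the grounded targets $\ul{SH}_1$ and $\ul{SH}_2$. Concretely, one must verify that the merely \emph{satisfiable} formula produced in Definition~\ref{def: Narrowing graph transformation} is in fact \emph{implied} by the formula of the grounded image under the composed match, so that the stronger direct-derivation precondition holds. Grounding supplies a concrete witness assignment for every variable in play, but the compatibility with the pushout complements computed symbolically (cf.\ Definition~\ref{def: ip}(iii)) and with the equivalences $\Phi_G \Leftrightarrow \Phi_{H_i}$ from Fact~\ref{fac: Properties of symbolic direct derivations} requires careful formula-level bookkeeping of $o_{i,\Phi}$ along the boundary--context construction. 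This coherence between the symbolic and the narrowing semantics is the step where the adaptation genuinely departs from the classical E-graph argument.
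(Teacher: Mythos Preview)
Your overall strategy matches the paper's: reduce to parallel dependence via Local Church--Rosser, obtain an embedded symbolic critical pair $Der_K$, then argue by contradiction that if $Der_K$ were directly confluent via narrowing derivations, $Der_G$ would be directly confluent via ordinary symbolic derivations, using the initial-pushout machinery of Theorem~6.28 in \cite{Fundamentals}. You also correctly identify the formula-level lifting from narrowing to genuine symbolic direct derivations as the crux.

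However, your explicit construction of $SK$ is garbled. The initial pushout over a match $m_i: L_i \to \ul{SG}$ produces a boundary $B_i \hookrightarrow L_i$ and a context $C_i \hookrightarrow \ul{SG}$, where $C_i$ is the part of $\ul{SG}$ \emph{outside} the interior of $m_i(L_i)$. The pullback of $C_1 \to \ul{SG} \leftarrow C_2$ is then the part of $\ul{SG}$ lying outside \emph{both} matched images, and there are no morphisms from this object into the $B_i$ along which to form the pushout you describe; the proposed gluing is not well-typed. The minimal $K$ is rather the \emph{union} $m_1(L_1) \cup m_2(L_2)$ inside $G$, obtained by the $\mathcal{E}'$--$\mathcal{M}'$ pair factorization of $[m_1,m_2]: L_1 + L_2 \to G$. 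The paper simply invokes Lemma~6.22 of \cite{Fundamentals} for this step and then verifies directly that the resulting E-graph morphisms $f,g,h$ are symbolic, using $m_{i,\Phi} = f_\Phi \circ o_{i,\Phi}$ and Fact~\ref{fac: Properties of symbolic direct derivations}. A second, related correction: in the contradiction half, the initial pushout that drives the embedding argument is taken over the morphism $f: SK \to \ul{SG}$ (it measures the extra context $\ul{SG}$ adds around $SK$), not over the matches $m_1,m_2$. With these two fixes your outline coincides with the paper's proof.
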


\begin{proof}

First, we show that symbolic morphisms $f$, $g$ and $h$ exist.

As $Der_G$ 
is not directly confluent, it has to be parallel dependent. 
Due to the completeness of critical pairs (Lemma 6.22 in \cite{Fundamentals}), there exists a critical pair $Der_K$ in E-graphs with E-graph morphisms $f$, $g$ and $h$. Consequently, assuming that $Der_K$ is a symbolic critical pair (according to Def.~\ref{def: scp}), we have to show that $f$, $g$ and $h$ are symbolic graph morphisms.

Due to the existence of $Der_G$, we have $\dalg \models ((\Phi_G \Rightarrow m_{1,\Phi}(\Phi_1)) \land (\Phi_G \Rightarrow m_{2,\Phi}(\Phi_2)))$ which is equivalent to $\dalg \models (\Phi_G \Rightarrow m_{1,\Phi}(\Phi_1) \land m_{2,\Phi}(\Phi_2))$. From the minimality of critical pairs (i.e., $\mathcal{E'-M'}$ pair factorization \cite{Fundamentals}), it follows that $m_{1,\Phi} = f_\Phi \circ o_{1,\Phi}$ and $m_{2,\Phi} = f_\Phi \circ o_{2,\Phi}$, we have $(m_{1,\Phi}(\Phi_1) \land m_{2,\Phi}(\Phi_2)) \Leftrightarrow (f_\Phi(o_{1,\Phi}(\Phi_1)) \land f_\Phi(o_{2,\Phi}(\Phi_2)))$. By factoring out $f_\Phi$, we get $f_\Phi(o_{1,\Phi}(\Phi_1) \land o_{2,\Phi}(\Phi_2)) \Leftrightarrow f_\Phi(\Phi_K)$. Hence, $\dalg \models (\Phi_G \Rightarrow f_\Phi(\Phi_K))$ and, thus, $f$ is a symbolic graph morphism.

To show that $g$ and $h$ are symbolic graph morphisms, we require $(\Phi_{H_1} \Leftrightarrow \Phi_{H_2} \Leftrightarrow \Phi_{G})$ and ($\Phi_{P_1} \Leftrightarrow \Phi_{P_2} \Leftrightarrow \Phi_{K})$ as well as 
$(f_\Phi = g_\Phi = h_\Phi)$, which are consequences of Fact~\ref{fac: Properties of symbolic direct derivations}. If $\dalg \models (\Phi_G \Rightarrow f_\Phi(\Phi_K))$, also 
$\dalg \models (\Phi_{H_1} \Rightarrow g_\Phi(\Phi_{P_1}))$ and  
$\dalg \models (\Phi_{H_2} \Rightarrow h_\Phi(\Phi_{P_2}))$ and hence, $g$ and $h$ are symbolic graph morphisms.

We prove the rest of the theorem by contradiction. 
Let us suppose that there exist no symbolic direct derivations 
$\ul{SH}_1 \overset{r_2,m'_2}{\Longrightarrow} \ul{SX}_1$ and $\ul{SH}_2 \overset{r_1,m'_1}{\Longrightarrow} \ul{SX}_2$ 
with $\ul{SX}_1$ and $\ul{SX}_2$ being isomorphic, whereas, for the narrowing direct derivations 
$SP_1 \Rrightarrow_{r_2,o'_2} SY_1$ and $SP_2 \Rrightarrow_{r_1,o'_1} SY_2$, 
it holds that $SY_1$ and $SY_2$ are isomorphic.
In order to prove that this supposition is indeed a contradiction, it suffices to show that if $SY_1$ and $SY_2$ are isomorphic, then 
$\ul{SH}_1 \overset{r_2,m'_2}{\Longrightarrow} \ul{SX}_1$ and $\ul{SH}_2 \overset{r_1,m'_1}{\Longrightarrow} \ul{SX}_2$ 
exist, and $\ul{SX}_1$ and $\ul{SX}_2$ are isomorphic.


In the following, we rely on the technique used in the proof of the Local Confluence Theorem (Theorem 6.28 in \cite{Fundamentals}), which is based on initial pushouts. We adapt this procedure to our setting of symbolic graphs with $\mathcal{M}$-morphisms. Analogously to that proof, we first create an initial pushout over the morphism $f$ according to Def.~\ref{def: ip}. The pullback object $\ul{SZ}$, defined in Property~\ref{cond2} of direct confluence (Def.~\ref{def: dirconfl}) together with the closure property of initial pushouts (Lemma 6.5 in \cite{Fundamentals}) ensure that for each of the embedding morphisms, we have an initial pushout with $a: SB \to \ul{SC}$. 
The diagram below shows the last step of this construction. 
As symbolic graphs with $\mathcal{M}$-morphisms constitute an adhesive HLR category \cite{Fundamentals}, we only have to show that the results of the narrowing transformations are compatible with the construction of initial pushouts. 

\begin{figure}[H]
\centering
\includegraphics[width=0.9\textwidth]{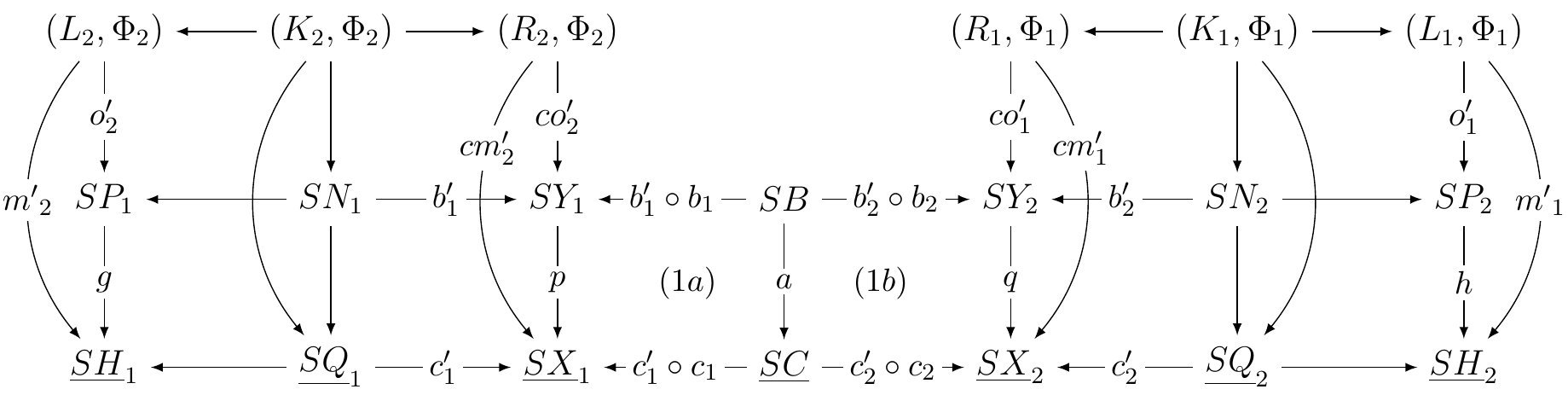}
\end{figure}

In particular, we have to show that if (2a) is an initial pushout in symbolic graphs, then (1a) is a pushout in symbolic graphs. 

\begin{figure}[H]
\centering
\includegraphics[width=0.4\textwidth]{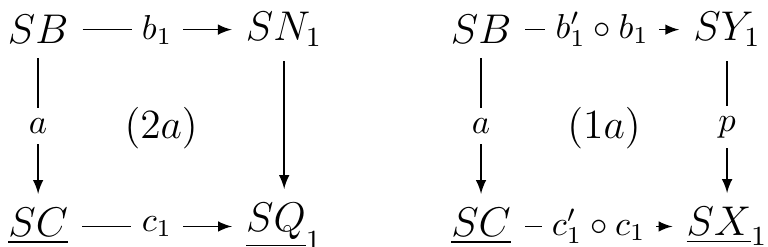}
\end{figure}
As (1a) is a pushout in E-graphs, this statement is equivalent to show that (i) morphisms $c'_1 \circ c_1$ and $b'_1 \circ b_1$ are symbolic graph morphisms and (ii) for the pushout (1a), $\dalg \models (\Phi_{X_1} \Leftrightarrow p_\Phi(\Phi_{Y_1}) \land c_{1,\Phi}(c'_{1,\Phi}(\Phi_C)))$ holds. 

(i). Since $c_1$ and $c'_1$ are both in $\mathcal{M}$, we can assume (without loss of generality) that $\Phi_C$ and $\Phi_{X_1}$ are the same formulas, and $V^D_C = V^D_{X_1}$ are the same sets of variables. Hence, $c'_{1,\Phi} \circ c_{1,\Phi}$ is the identity and, therefore, it is a symbolic graph morphism as $\dalg \models (\Phi_{X_1} \Rightarrow c'_{1,\Phi}(c_{1,\Phi}(\Phi_C)))$ trivially holds. For morphism  $b'_1 \circ b_1$, we have to show that $\dalg \models (\Phi_{Y_1}\Rightarrow b'_{1,\Phi}(b_{1,\Phi}(\Phi_B)))$ holds. By the definition of narrowing graph transformation, we have $\Phi_{Y_1}:=\Phi_{P_1} \land co'_{2,\Phi}(\Phi_2)$. It follows from the existence of the initial pushout $(\ul{SC} \leftarrow SB \rightarrow SP_1)$ that $\Phi_B\Leftrightarrow\Phi_{P_1}$ and hence, we have that $\dalg \models (\Phi_{Y_1}\Rightarrow b'_{1,\Phi}(b_{1,\Phi}(\Phi_B)))$ is equivalent to $\dalg \models ((\Phi_{P_1} \land co'_{2,\Phi}(\Phi_2)) \Rightarrow b'_{1,\Phi}(b_{1,\Phi}(\Phi_B)))$ which holds as $b_1$ and $b'_1$ are both in $\mathcal{M}$ and, therefore, $b'_{1,\Phi} \circ b_{1,\Phi}$ is the identity.

(ii). $\Rightarrow$: We have to show that $\dalg \models (\Phi_{X_1} \Rightarrow p_\Phi(\Phi_{Y_1}))$ and $\dalg \models (\Phi_{X_1} \Rightarrow c'_{1,\Phi}(c_{1,\Phi}(\Phi_C)))$ holds. While the latter has been already shown above, 
it remains to show that  
$\dalg \models (\Phi_{X_1} \Rightarrow p_\Phi(\Phi_{Y_1}))$. With $\Phi_{Y_1}:=\Phi_{P_1} \land co'_{2,\Phi}(\Phi_2)$ (from the definition of narrowing transformation), we have $(\Phi_{X_1} \Rightarrow p_\Phi(\Phi_{Y_1})) \Leftrightarrow (\Phi_{X_1} \Rightarrow p_\Phi(\Phi_{P_1} \land co'_{2,\Phi}(\Phi_2)))$ which is equivalent to $(\Phi_{X_1} \Rightarrow p_\Phi(\Phi_{P_1})) \land (\Phi_{X_1} \Rightarrow p_\Phi(co'_{2,\Phi}(\Phi_2))$. Due to Fact~\ref{fac: Properties of symbolic direct derivations}, we have $g_\Phi = p_\Phi$; by the construction of the symbolic direct derivation $\ul{SH}_1 \overset{r_2,m'_2}{\Longrightarrow} \ul{SX}_1$, $\Phi_{H_1} \Leftrightarrow \Phi_{X_1}$ holds; therefore, $\dalg \models (\Phi_{X_1} \Rightarrow p_\Phi(\Phi_{P_1}))$ is equivalent to
$\dalg \models (\Phi_{H_1} \Rightarrow g_\Phi(\Phi_{P_1}))$, which is given by the existence of the symbolic graph morphism $g:SP_1 \to \ul{SH}_1$.
It remains to show that $\dalg \models (\Phi_{X_1} \Rightarrow p_\Phi(co'_{2,\Phi}(\Phi_2)))$ which can be reformulated as 
$\Phi_{X_1} \Rightarrow cm'_{2,\Phi}(\Phi_2)$, using 
$p_\Phi \circ co'_{2,\Phi}=cm'_{2,\Phi}$. The implication $\dalg \models (\Phi_{X_1} \Rightarrow cm'_{2,\Phi}(\Phi_2))$ holds due to the existence of the symbolic graph morphism $cm'_2$.

(ii). $\Leftarrow$: By the construction of initial pushouts, we have that $c'_{1,\Phi}(c_{1,\Phi}(\Phi_C)) \Leftrightarrow \Phi_{X_1}$ and hence $p_\Phi(\Phi_{Y_1}) \land c'_{1,\Phi}(c_{1,\Phi}(\Phi_C)) \Rightarrow \Phi_{X_1}$.

We can show in the same way that (1b) is a pushout in symbolic graphs as well. It follows from the uniqueness of the pushout object that if $SY_1$ and $SY_2$ are isomorphic, so are $\ul{SX}_1$ and $\ul{SX}_2$.

This way, we have shown that our supposition contains a contradiction and, therefore, if $Der_G$ is not directly confluent, then $Der_K$ is a \NCP pair which can be embedded into $Der_G$.

\end{proof}

This proof shows that our proposed notion of \NCP pairs effectively represents the minimal conflict instances based on direct confluence and, thus, provides a means to lift conflict detection to rule level. Moreover, the general nature of the proof also demonstrates that the proposed technique is not restricted to the attributed setting used as motivation. In fact, direct confluence and \NCP pairs can be effectively used as an incremental extension of the existing conflict results for plain graphs as well.

\section{Related Work}
\label{05_RelWork}

\textbf{Symbolic graphs.} Symbolic graphs and symbolic graph transformation have been introduced by Orejas and Lambers in \cite{OL10a, Lazy} as a generalized 
and convenient representation for attributed graphs and attributed graph transformation. 
However, a proper notion of conflicts and a corresponding conflict detection process have not been considered 
in this framework.

\noindent\textbf{Conflicts.} 
The concept of conflicts has been adopted to graph transformation with negative application conditions and to attributed graph transformation with inheritance \cite{LaEhOr2006,GoLaEhOr2012}. In contrast to the proposed technique, these approaches rely 
on the notion of parallel dependence for determining conflicts. As a consequence, they still recognize 
a conflict whenever two rules access the same attribute and at least one modifies its value (regardless of the semantics of the access operations actually performed).
 
The concept of local confluence, which is a generalization of direct confluence, has its origins in term rewriting systems.  The applicability of local confluence to attributed graph transformation is shown in \cite{ConflOfAG}. However, in contrast to direct confluence, local confluence is undecidable even for graphs without attributes.
Additionally, the transformation of term attributed graphs, which is required to check local confluence, requires term unification to be performed at every derivation step. Contrary, in the symbolic case, where the formula is constructed stepwise at the syntactical level and is validated afterwards, e.g., by using off-the-shelf SMT solvers. 

\noindent\textbf{Refining conflict detection.} To the best of our knowledge, the only approach except for ours 
to formally capture and extend the notion of critical pairs is that of Lambers et al. \cite{EfficientConflictDetection}. 
They also try to narrow the set of actual conflicts, however, their approach is based on directly expressing the actual 
conflict cause by means of categorical notions and not on giving a new condition for checking which conflicts are considered relevant.

From a practical perspective, the approach of Cabot et al.~\cite{CabotSoSym2010} presents a fully-fledged graph transformation tool framework which also incorporates an analysis of graph transformation rules to verify certain properties, where their concept of conflict and independence strongly corresponds to our notion of direct confluence. The authors also remark that, similar to our technique, they only have to test the minimal models for those properties. Nevertheless, the approach of~\cite{CabotSoSym2010} is completely practical and it is based on a preceding translation of the rules into OCL expressions and, therefore, the theoretical aspects of our approach are not considered at all.


\section{Conclusion}
\label{06_Conclusion}

In this paper, we have proposed an improved conflict detection procedure for graph transformation with attributes. Our approach uses symbolic graphs as a framework and is based on the notion of conflicts and direct confluence. This way, we are able to explicitly take the intention of the attribute operations during conflict detection into account and to potentially exclude some false positive conflicts, emerging from the conservative conflict condition of earlier approaches, while still retaining completeness.

Based on this formal framework, we aim at implementing the approach using an off-the-shelf SMT solver, e.g., Z3, MathSAT or SMTInterpol~\cite{MoBj2008,mathsat5,ChHoeNu2012} and perform experiments regarding applicability and performance. Furthermore, we plan to apply this implementation to conduct case studies comprising modeling languages apparent in model-driven engineering.


\bibliographystyle{eptcs}
\bibliography{geza}

\end{document}